\newtheorem{theorem}{Theorem}[section]
\newtheorem{lemma}{Lemma}[section]
\title{A comparison of score-based methods for estimating Bayesian networks using the Kullback-Leibler divergence}
\author{
          Jessica Kasza, Patty Solomon\\
\it        School of Mathematical  Sciences \\
\it       University of Adelaide  \\
\it        S.A. 5005, Australia }
\begin{document}

\maketitle{}
\begin{abstract}
\noindent
In this paper, we compare the performance of  two methods for estimating  Bayesian networks from  data containing exogenous variables and random effects. The first method is  fully Bayesian in which a prior distribution is placed on the exogenous variables, whereas the second method, which we call the residual approach, accounts for the effects of exogenous variables by using the notion of  restricted maximum likelihood.  We review the two score-based metrics, then study their performance by measuring the Kullback Leibler divergence, or distance, between the two resulting posterior density functions. The Kullback Leibler divergence provides a natural framework for comparing distributions.  The residual approach is considerably simpler to apply in practice and we demonstrate its utility both theoretically and via simulations.  In particular, in applications where the exogenous variables are not of primary interest,  we show that the potential loss of information about parameters and induced components of correlation, is generally small. 
\end{abstract}
Keywords: Bayesian network,  Exogenous variables, Kullback Leibler divergence, Gene regulatory networks, Variance components

\thispagestyle{myheadings}
\markboth{}{Comparison of Bayesian network scores}

\section{Introduction}

Methods for the estimation of Bayesian networks,  which encode conditional independence relationships of a set of variables, have, until recently, assumed data sets that consist of independent and identically distributed samples, as described in Chapter 16 of \cite{KollerFriedman}. These methods may be split into two categories, called constraint-based and score-based methods, \cite{KollerFriedman, SGS, MMHC}. Recent work by Kasza {\it et al} \cite{KaszaArticle}, has extended the applicability of score-based methods to data sets which do not necessarily consist of independent and identically distributed samples. These authors developed two score metrics  which are extensions of the BGe metric of Geiger and Heckerman,  \cite{LearningGaussianNetworks}, for use in conjunction with score-based methods, to account for complex sampling structures and additional components of variance. The first metric, called the Bayesian score metric, involves placing a prior distribution on the effects of exogenous variables. The second metric, inspired by the notion of restricted maximum likelihood, and called the residual score metric, is non-parametric in the effects of exogenous variables. These two score metrics lead to different posterior distributions for Bayesian network parameters, and a formal comparison of these posterior distributions is necessary to determine if the residual approach  provides a useful alternative to the (fully) Bayesian approach.  This comparison is the subject of the present paper.

In Section \ref{ScoreRevSection}, score-based estimation of Bayesian networks is briefly reviewed, as are the Bayesian and residual score metrics. The posterior distributions obtained using each score metric are also presented here. In Section \ref{InfoLossSection} the posterior distributions are compared using the Kullback Leibler divergence, which in general provides a useful basis for comparing probability density functions. The  comparison of the posterior densities based on the Kullback Leibler divergence  provides justification for the use of the residual score metric in the estimation of Bayesian networks, both theoretically, by simulations and the analysis of data on grape-berry heat-shock genes in  Section \ref{Examples}.

\section{Learning Bayesian networks and estimating parameters}\label{ScoreRevSection}

Bayesian networks were  first introduced by Pearl in  \cite{Pearl}.   A Bayesian network $B = \left( \mathcal{G}, \Theta \right)$, $\Theta =  \left\{ \theta_1, \ldots, \theta_p\right\}$, for a random vector $\boldsymbol{X}=(X_1,   \ldots, X_p)^T$ consists of two components: a directed acyclic graph associated with $\boldsymbol{X}$, $\mathcal{G} = \left(V,E \right)$, with $V = \left\{ X_1, \ldots, X_p\right\}$, $E \subseteq V \times V$, and a set of conditional distributions  $\left\{f(x_i| \boldsymbol{x}_{P_i}, \theta_i)|i = 1,\ldots, p\right\}$. The set $P_i$ consists of those variables $X_j$ such that there is a directed edge from $j$ to $i$ in $\mathcal{G}$: $P_i = \{X_j |  (j,i) \in E\}$. The joint distribution for $\boldsymbol{X}$ may then be written as
\begin{eqnarray}
f\left(\boldsymbol{x} |  \Theta \right) = \prod_{i=1}^p f(x_i| \boldsymbol{x}_{P_i}, \theta_i). \notag
\end{eqnarray}
We make the assumption that $\boldsymbol{X} |  \Theta \sim N(0, \Sigma)$. Bayesian networks are particularly useful as they allow the estimation of covariance matrices for high-dimensional data sets, which contain fewer samples than random variables, since $\Sigma$ can be estimated from the Bayesian network. Additionally, the directed acyclic graph of a Bayesian network encodes information about the conditional dependence relationships between the variables in $\boldsymbol{X}$. The directed Markov properties, as described in Lauritzen \cite{Lauritzen}, for example, allow for more conditional independence relationships to be read directly from the graph $\mathcal{G}$ than could be read from $\Sigma^{-1}$.

Estimation of a Bayesian network for $\boldsymbol{X}$ given a data set $\boldsymbol{d}$ requires estimation of the parameters $\Theta$ and learning the structure of $ \mathcal{G}$.  To learn the structure, score-based methods move through the space of directed acyclic graphs, attempting to find the graph that maximises some score metric. An obvious choice of score metric is the likelihood of a graph, however, the structure that maximises the likelihood is the complete directed acyclic graph, encoding no conditional independence relationships, \cite{KollerFriedman}. Bayesian score metrics such as those considered here avoid this problem of over-fitting. When Bayesian score metrics are used to learn structure, parameters may be estimated using Bayesian techniques.

The Bayesian score of a directed acyclic graph $\mathcal{G}$ for a random variable $\boldsymbol{X}$ is defined to be proportional to the posterior probability of the graph given the data set $\boldsymbol{d}$, \cite{GeigerHeckerman}:
\begin{eqnarray}
S(\mathcal{G}| \boldsymbol{d}) = p(\mathcal{G}) p(\boldsymbol{d}|\mathcal{G}) = p(\mathcal{G}) \int_{\mathcal{P}(p)} p(\boldsymbol{d}|\mathcal{G}, \Theta) p(\Theta | \mathcal{G}) d \Theta,  \label{scoremetric}
\end{eqnarray}
where $p(\mathcal{G})$ is the prior probability of the graph $\mathcal{G}$,  $p(\boldsymbol{d}|\mathcal{G})$ is the marginal likelihood of the data given the structure, and $\mathcal{P}(p)$ is the space of symmetric positive-definite $p\times p$ matrices. We will not consider $p(\mathcal{G})$ any further.

Given the  acyclicity of the graphs considered, $p(\boldsymbol{d}|\mathcal{G}, \Theta) = \prod_{i=1}^p f(\boldsymbol{x}_i | \boldsymbol{x}_{P_i}, \theta_i),$
where $\boldsymbol{x}_i$ is the $n$-vector of samples of $X_i$, and $\boldsymbol{x}_{P_i}$ is the $n \times |P_i|$ matrix of samples of the parents of $X_i$ in the graph $\mathcal{G}$. The usual assumption is that the $n$ samples are independent and identically normally distributed: $$\boldsymbol{x}_i | \boldsymbol{x}_{P_i},\boldsymbol{\gamma}_i, \psi_i \sim N(\boldsymbol{x}_{P_i}\boldsymbol{\gamma}_i, \psi_i I).$$

To get the score metric in Equation \eqref{scoremetric}, prior distributions are required for  $\boldsymbol{\gamma}_i$ and $\psi_i$. As shown by Geiger and Heckerman, \cite{GeigerHeckerman}, in the case of iid samples, to obtain a score metric that scores graphs that encode equivalent sets of independence relationships identically, a property known as score equivalence, the choice of priors for $\boldsymbol{\gamma}_i$ and $\psi_i$ is limited to priors of the form
\begin{eqnarray}
\boldsymbol{\gamma}_i| \psi_i \sim N_{|P_i|} (\boldsymbol{0}, \tau^{-1}\psi_i I ), \quad \psi_i \sim \text{Inverse Gamma}\left( \frac{\delta + |P_i|}{2}, \frac{\tau}{2}\right). \label{OrigPriors} 
 \end{eqnarray}
Given these priors, the BGe score metric of \cite{LearningGaussianNetworks} is obtained, which we denote $S_O(\mathcal{G}| \boldsymbol{d}) = p(\mathcal{G})  \prod_{i=1}^p f_O(\boldsymbol{x}_i | \boldsymbol{x}_{P_i})$. The expression for $f_O$ is provided in Appendix A.
\medskip

\subsection*{Non-independent and identically distributed data}

Often the available data set will be more complex, with non-independent samples, or a complex mean structure including exogenous variables as random effects . Such additional complexities may be accounted for through the inclusion of $m$ exogenous variables in the model,  \cite{Kasza}, \cite{KaszaArticle}. If $Q$ is the $n \times m$ matrix containing data on  $m$ exogenous variables, we assume
$$\boldsymbol{x}_i | \boldsymbol{x}_{P_i},\boldsymbol{\gamma}_i, \psi_i, \boldsymbol{b}_i \sim N(\boldsymbol{x}_{P_i}\boldsymbol{\gamma}_i + Q\boldsymbol{b}_i, \psi_i I),\label{llhd}$$
where the elements in $\boldsymbol{b}_i$ are called the effects of the exogenous variables.

In addition to priors for $\boldsymbol{\gamma}_i$ and $\psi_i$, a prior is required for $\boldsymbol{b}_i$, the effect of the exogenous variables. Kasza {\it et al} \cite{KaszaArticle} note that $\boldsymbol{b}_i$ may be dealt with in two ways, leading to two different score metrics. To satisfy score equivalence, these approaches both use the priors in Equation \eqref{OrigPriors} for $\boldsymbol{\gamma}_i$ and $\psi_i$. 

The first approach, called the Bayesian approach, is to place a prior distribution on $\boldsymbol{b}_i$. An extension of a result in  \cite{GeigerHeckerman} implies that in order for score equivalence to hold, if $var(\boldsymbol{b}_i ) = \Sigma_{\boldsymbol{b}_i}$, $\boldsymbol{b}_i |  \Sigma_{\boldsymbol{b}_i}$ must be normally distributed. We consider prior distributions for  $\boldsymbol{b}_i$ that are of the form $\boldsymbol{b}_i | \psi_i \sim N_m (\boldsymbol{0},  \psi_i V)$, since these are the only priors that result in a score metric with a closed form. The Bayesian score metric is given by
$S_B (\mathcal{G}| \boldsymbol{d}) = p(\mathcal{G})  \prod_{i=1}^p f_V(\boldsymbol{x}_i | \boldsymbol{x}_{P_i})$, where $f_V$ is given in Appendix A.

The second approach, called the residual approach, is non-parametric in $\boldsymbol{b}_i$. This  removes the effects of exogenous variables by using linear combinations of residuals obtained after regressing $\boldsymbol{x}_i$ on the columns of $Q$. This is achieved by pre-multiplying each $\boldsymbol{x}_i$ by $P^T$, where $P$ is an $n \times (n-m)$ matrix such that  $P^TQ = 0$, $P^TP = I$, $PP^T =  I - Q \left( Q^TQ \right)^{-1}Q^T$.
It can then be shown that $P^T\boldsymbol{x}_i \sim N_{n-m} \left( P^T\boldsymbol{x}_{P_i}\boldsymbol{\gamma}_i, \psi_i I \right)$. The residual approach is  related to restricted maximum likelihood estimation, and  is particularly advantageous when the effects of the exogenous variables are included  to improve the estimation of a Bayesian network for $\boldsymbol{X}$, but are  not of  intrinsic interest in themselves. Additionally,  when the prior covariance matrix of $\boldsymbol{b}_i$ cannot be accurately specified, or when the assumption of a normal prior distribution for the  $\boldsymbol{b}_i$ is not warranted, the residual approach is preferable to the Bayesian approach. The residual score metric is given by $S_R(\mathcal{G}| \boldsymbol{d}) = p(\mathcal{G})  \prod_{i=1}^p f_R(\boldsymbol{x}_i | \boldsymbol{x}_{P_i})$, and $f_R$ is shown in Appendix A.

Having used either the Bayesian or residual score metric for learning the graphical structure, parameter estimates may be obtained from posterior distributions. Since posterior estimates of $b_i$ are unavailable from the residual approach, we only consider the posterior distributions of  $\boldsymbol{\gamma}_i$ and $\psi_i$.

Using the priors from the Bayesian approach, the likelihood given in Equation \eqref{llhd} and Bayes' theorem, the following posteriors  are obtained:
\begin{eqnarray}
\boldsymbol{\gamma}_i | \psi_i, \boldsymbol{x}_i, \boldsymbol{x}_{P_i} &\sim& N_{|P_i|} \left( \boldsymbol{\mu}_B , \psi_i \left( \tau I +  \boldsymbol{x}_{P_i}^T H_{V}   \boldsymbol{x}_{P_i} \right)^{-1}\right), \notag \\
 \boldsymbol{\mu}_B &=& \left( \tau I +  \boldsymbol{x}_{P_i}^T H_{V}   \boldsymbol{x}_{P_i} \right)^{-1}\boldsymbol{x}_{P_i} ^T H_{V}  \boldsymbol{x}_{i},\notag \\
\psi_i| \boldsymbol{x}_i, \boldsymbol{x}_{P_i} &\sim& \text{Inverse Gamma} \left(\frac{\delta + n+ |P_i|}{2}, \beta_B \right), \notag \\
\beta_B &=& \frac{\tau}{2} + \frac{1}{2}\boldsymbol{x}_{i}^T H_{V}  \boldsymbol{x}_{i} - \frac{1}{2}\boldsymbol{x}_{i}^T H_{V}  \boldsymbol{x}_{P_i}\left( \tau I +  \boldsymbol{x}_{P_i}^T H_{V}   \boldsymbol{x}_{P_i} \right)^{-1}\boldsymbol{x}_{P_i}^T H_{V}  \boldsymbol{x}_{i}.\notag
\end{eqnarray}
The joint posterior density obtained under the full Bayesian approach is denoted by $f_B(\boldsymbol{\gamma}_i, \psi_i | \boldsymbol{x}_i, \boldsymbol{x}_{P_i})$.

Similarly, using the residual approach, the posteriors can be shown to be
\begin{eqnarray}
\boldsymbol{\gamma}_i | \psi_i, \boldsymbol{x}_i, \boldsymbol{x}_{P_i} &\sim& N_{|P_i|} \left(\boldsymbol{\mu}_R, \psi_i \left( \tau I +  \boldsymbol{x}_{P_i}^T PP^T  \boldsymbol{x}_{P_i} \right)^{-1}\right), \notag \\
\boldsymbol{\mu}_R &= &\left( \tau I +  \boldsymbol{x}_{P_i}^T PP^T  \boldsymbol{x}_{P_i} \right)^{-1}\boldsymbol{x}_{P_i} ^T PP^T\boldsymbol{x}_{i},\notag \\
\psi_i| \boldsymbol{x}_i, \boldsymbol{x}_{P_i} &\sim& \text{Inverse Gamma} \left(\frac{\delta + n-m+ |P_i|}{2}, \beta_R \right), \notag \\
\beta_R &=& \frac{\tau}{2} + \frac{1}{2}\boldsymbol{x}_{i}^T PP^T \boldsymbol{x}_{i} - \frac{1}{2}\boldsymbol{x}_{i}^T PP^T \boldsymbol{x}_{P_i}\left( \tau I +  \boldsymbol{x}_{P_i}^T PP^T \boldsymbol{x}_{P_i} \right)^{-1}\boldsymbol{x}_{P_i}^T PP^T \boldsymbol{x}_{i}. \notag 
\end{eqnarray}
The joint posterior density obtained under the residual approach is denoted by $f_R(\boldsymbol{\gamma}_i, \psi_i | \boldsymbol{x}_i, \boldsymbol{x}_{P_i}) $.

The residual approach does not require the specification of any hyperparameters relating to $\boldsymbol{b}_i$, making it easier to use than the Bayesian approach. Given that in the Bayesian approach, the variance of $\boldsymbol{b}_i$ is dependent upon $\psi_i$, and in turn related to the variance of $\boldsymbol{\gamma}_i$, we may obtain less information about these parameters when the residual approach is used instead of the Bayesian approach. It is important to quantify the difference between the Bayesian and residual approaches in this respect, and this is done in the next section  by measuring the Kullback-Leibler distance between  $f_B(\boldsymbol{\gamma}_i, \psi_i | \boldsymbol{x}_i, \boldsymbol{x}_{P_i})$ and $f_R(\boldsymbol{\gamma}_i, \psi_i | \boldsymbol{x}_i, \boldsymbol{x}_{P_i})$.


\section{Comparison of approaches} \label{InfoLossSection}

Using the Kullback-Leibler divergence as a measure of the distance between the density functions, we show that the distance between the posterior densities for the Bayesian network parameters $\boldsymbol{\gamma}_i$ and $\psi_i$ obtained under the Bayesian and residual approaches is generally small, and decreases as the sample size increases. In this way, theoretical justification for the residual approach is provided.

The Kullback-Leibler divergence, \cite{KLOrig}, between  $f_B(\boldsymbol{\gamma}_i, \psi_i | \boldsymbol{x}_i, \boldsymbol{x}_{P_i})$ and $f_R(\boldsymbol{\gamma}_i, \psi_i | \boldsymbol{x}_i, \boldsymbol{x}_{P_i})$ is given by 
\begin{eqnarray}
D(f_B, f_R) =  \int_{(0, \infty) \times \mathbb{R}^{|P_i|}} \log\left\{ \frac{f_B(\boldsymbol{\gamma}_i, \psi_i | \boldsymbol{x}_i, \boldsymbol{x}_{P_i})}{f_R(\boldsymbol{\gamma}_i, \psi_i | \boldsymbol{x}_i, \boldsymbol{x}_{P_i})} \right\}f_B(\boldsymbol{\gamma}_i, \psi_i | \boldsymbol{x}_i, \boldsymbol{x}_{P_i}) d\boldsymbol{\gamma}_i d\psi_i. \notag
\end{eqnarray}
The exact formula is set out in Appendix B. 
Instead of just considering the divergence associated with $\boldsymbol{\gamma}_i$ and $\psi_i$ associated with a given $X_i$, the divergence associated with $\Sigma$, the covariance matrix of $\boldsymbol{X}$ after marginalising over $\boldsymbol{b}_i$, may be obtained. 
The divergence between $f_B(\Sigma| \boldsymbol{X})$, the posterior density of  $\Sigma$ obtained under the Bayesian approach, and $f_R(\Sigma| \boldsymbol{X})$, the posterior obtained under the residual approach, is then available.

\begin{lemma}\label{Lemma}
If the underlying graphical structure of $\boldsymbol{X}$ is known, the divergence between $f_B(\Sigma| \boldsymbol{X})$ and $f_R(\Sigma| \boldsymbol{X})$ is given by
\begin{eqnarray}
D_{\Sigma} \left\{f_{B}(\Sigma| \boldsymbol{X}), f_{R}(\Sigma| \boldsymbol{X})\right\} = \sum_{i=1}^p D\left\{f_B(\boldsymbol{\gamma}_i, \psi_i | \boldsymbol{x}_i, \boldsymbol{x}_{P_i}),f_R(\boldsymbol{\gamma}_i, \psi_i | \boldsymbol{x}_i, \boldsymbol{x}_{P_i}) \right\}.  \notag
\end{eqnarray}
If the underlying graphical structure of $\boldsymbol{X}$ is not known, bounds for the divergence are given by the divergence for the covariance matrix corresponding to a graph with no edges:
\begin{eqnarray}
D_{\Sigma}^e =  \sum_{i=1}^p D\left\{f_B(\boldsymbol{\gamma}_i, \psi_i | \boldsymbol{x}_i),f_R(\boldsymbol{\gamma}_i, \psi_i | \boldsymbol{x}_i) \right\},  \notag
\end{eqnarray} 
and the divergence for the covariance matrix of an arbitrary full graph:
\begin{eqnarray}
D_{\Sigma}^f =  \sum_{i=1}^p D\left\{f_B(\boldsymbol{\gamma}_i, \psi_i | \boldsymbol{x}_i, \boldsymbol{x}_{1}, \ldots,  \boldsymbol{x}_{i-1}),f_R(\boldsymbol{\gamma}_i, \psi_i | \boldsymbol{x}_i, \boldsymbol{x}_{1}, \ldots,  \boldsymbol{x}_{i-1}) \right\}.  \notag
\end{eqnarray} 
\end{lemma}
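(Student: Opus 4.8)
The plan is to exploit the fact that, once the directed acyclic graph $\mathcal{G}$ is fixed, the map sending the node-wise parameters $\Theta=\{(\boldsymbol{\gamma}_i,\psi_i)\}_{i=1}^p$ to $\Sigma$ is a bijection (onto $\mathcal{P}(p)$ when $\mathcal{G}$ is complete, and onto the subset of $\mathcal{P}(p)$ compatible with $\mathcal{G}$ otherwise) whose Jacobian depends only on this parametrisation and not on which of the two priors for $\boldsymbol{b}_i$ was used, together with the reparametrisation-invariance of the Kullback--Leibler divergence and its additivity over independent factors. First I would observe that, by the acyclicity of $\mathcal{G}$, the product form of $p(\boldsymbol{d}\mid\mathcal{G},\Theta)$, and the independence across $i$ of the priors on $(\boldsymbol{\gamma}_i,\psi_i)$, the joint posterior factorises as $f_B(\Theta\mid\boldsymbol{X})=\prod_{i=1}^p f_B(\boldsymbol{\gamma}_i,\psi_i\mid\boldsymbol{x}_i,\boldsymbol{x}_{P_i})$, and likewise for $f_R$; the posterior for node $i$ involves only $\boldsymbol{x}_i$ and $\boldsymbol{x}_{P_i}$ because the marginal law of $\boldsymbol{x}_{P_i}$ carries no information about $(\boldsymbol{\gamma}_i,\psi_i)$.

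Writing $\Sigma=\Sigma(\Theta)$ for the bijection induced by $\mathcal{G}$ and $J$ for its common Jacobian factor, the change-of-variables formula gives $f_B(\Sigma\mid\boldsymbol{X})=J\prod_i f_B(\boldsymbol{\gamma}_i,\psi_i\mid\boldsymbol{x}_i,\boldsymbol{x}_{P_i})$ and the same with $R$ in place of $B$. Thus $J$ cancels in the ratio $f_B(\Sigma\mid\boldsymbol{X})/f_R(\Sigma\mid\boldsymbol{X})$, and transforming the integral defining $D_\Sigma$ from $\Sigma$ to $\Theta$ replaces $f_B(\Sigma\mid\boldsymbol{X})\,d\Sigma$ by $\prod_i f_B(\boldsymbol{\gamma}_i,\psi_i\mid\boldsymbol{x}_i,\boldsymbol{x}_{P_i})\,d\boldsymbol{\gamma}_i\,d\psi_i$ while leaving the log-ratio unchanged. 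Since each factor integrates to one, integrating out the coordinates $j\neq i$ in the $i$-th summand leaves exactly $D\{f_B(\boldsymbol{\gamma}_i,\psi_i\mid\boldsymbol{x}_i,\boldsymbol{x}_{P_i}),f_R(\boldsymbol{\gamma}_i,\psi_i\mid\boldsymbol{x}_i,\boldsymbol{x}_{P_i})\}$, and summing over $i$ gives the first displayed identity; equivalently, the divergence between two product measures is the sum of the componentwise divergences.

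For the case of unknown structure I would apply this decomposition to the edgeless graph and to a complete graph. Fix a topological ordering of $\mathcal{G}$ and relabel so that it is $1,\dots,p$; then $\emptyset\subseteq P_i\subseteq\{X_1,\dots,X_{i-1}\}$ for every $i$. Denoting by $D_i(S)$ the node-$i$ divergence $D\{f_B(\boldsymbol{\gamma}_i,\psi_i\mid\boldsymbol{x}_i,\boldsymbol{x}_S),f_R(\boldsymbol{\gamma}_i,\psi_i\mid\boldsymbol{x}_i,\boldsymbol{x}_S)\}$, I would then show that $D_i(S)$ is nondecreasing under enlargement of the conditioning set $S$; summing the inequalities $D_i(\emptyset)\le D_i(P_i)\le D_i(\{X_1,\dots,X_{i-1}\})$ over $i$, and using that the complete-graph divergence does not depend on the chosen ordering — because the two metrics are score-equivalent and all complete directed acyclic graphs induce a common prior, hence a common posterior, on $\Sigma$, by the extension of Geiger and Heckerman's argument in Kasza {\it et al} \cite{KaszaArticle} — yields $D_\Sigma^e\le D_\Sigma\{f_B(\Sigma\mid\boldsymbol{X}),f_R(\Sigma\mid\boldsymbol{X})\}\le D_\Sigma^f$.

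The two bookkeeping steps above are routine; the substantive point, and the step I expect to require the most care, is the monotonicity of $D_i(S)$ in $S$. I would establish it from the explicit expression of Appendix B: appending a column to $\boldsymbol{x}_{P_i}$ replaces the matrices $\boldsymbol{x}_{P_i}^T H_V\boldsymbol{x}_{P_i}$ and $\boldsymbol{x}_{P_i}^T PP^T\boldsymbol{x}_{P_i}$ by bordered analogues, while the discrepancy between the two posteriors is governed by $H_V-PP^T=Q\{(Q^TQ)^{-1}-(Q^TQ+V^{-1})^{-1}\}Q^T$, which is positive semidefinite of rank at most $m$. A Cauchy-type interlacing argument for the eigenvalues of the two quadratic forms, combined with monotonicity of the normal--inverse-gamma divergence in those eigenvalues, should then show that enlarging $S$ cannot decrease $D_i$. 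The delicate part is controlling the interplay between the contribution of $\boldsymbol{\gamma}_i$ and that of $\psi_i$ — in particular the dependence on $|P_i|$ of the inverse-gamma shape parameters $(\delta+n+|P_i|)/2$ and $(\delta+n-m+|P_i|)/2$ — which blocks a termwise comparison and forces one to bound the combined expression directly.
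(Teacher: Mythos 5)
Your argument for the first identity is correct and is, in substance, exactly what the paper's one-line proof ("follows directly from the properties of the Kullback Leibler divergence") is invoking: the posterior factorises over nodes by the acyclic factorisation of the likelihood and the independence of the node-wise priors, the map $\Theta\mapsto\Sigma$ induced by a fixed graph is a smooth bijection onto its image so the Kullback--Leibler divergence is invariant under it, and the divergence between two product densities is the sum of the componentwise divergences. Your observation that the complete-graph term is well defined (independent of the topological ordering) because all complete directed acyclic graphs induce the same prior, hence the same posterior, on $\Sigma$ is also the right justification and is needed for the phrase "an arbitrary full graph" to make sense.

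The gap is in the second half. Your reduction of the bounding claim to the monotonicity statement $D_i(\emptyset)\le D_i(P_i)\le D_i(\{X_1,\ldots,X_{i-1}\})$ is the right reduction, but that monotonicity is the entire mathematical content of the claim that $D_{\Sigma}^e$ and $D_{\Sigma}^f$ bound the true divergence, and you do not establish it: you gesture at an interlacing argument and then concede that the dependence of the inverse-gamma shape parameters on $|P_i|$ "blocks a termwise comparison and forces one to bound the combined expression directly" without carrying that out. Note in particular that the gamma-ratio and digamma terms in the Appendix~B expression change with $|P_i|$ in ways that are not obviously signed, and the quadratic term involves $\boldsymbol{\mu}_R-\boldsymbol{\mu}_B$, which need not grow coordinatewise as columns are appended to $\boldsymbol{x}_{P_i}$; so monotonicity of $D_i(S)$ in $S$ is a genuinely nontrivial assertion, not a bookkeeping step. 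To be fair to you, the paper itself supplies no argument for this either --- its proof is a single sentence, and its Section~4 language ("reasonable approximations") suggests the authors may intend the empty- and full-graph quantities as reference values rather than proven bounds --- so you have correctly isolated where the real work lies; but as a proof of the lemma as stated, your proposal leaves that step open.
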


\begin{proof}
This result follows directly from the properties of the Kullback Leibler divergence.
\end{proof}

Our main result  is the following Theorem which justifies the use of the residual approach instead of the Bayesian approach:

\begin{theorem}\label{Dinfinity}
As $n\rightarrow \infty$, $D_{\Sigma} \left\{f_{B}(\Sigma| \boldsymbol{X}), f_{R}(\Sigma| \boldsymbol{X})\right\} \rightarrow 0$.
\end{theorem}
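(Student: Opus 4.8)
The plan is to use the additive decomposition of Lemma~\ref{Lemma}. Since the number of variables $p$ is fixed and does not grow with $n$, it suffices to show that each per-variable summand $D\{f_B(\boldsymbol{\gamma}_i,\psi_i\mid\boldsymbol{x}_i,\boldsymbol{x}_{P_i}),f_R(\boldsymbol{\gamma}_i,\psi_i\mid\boldsymbol{x}_i,\boldsymbol{x}_{P_i})\}\to 0$; this simultaneously handles $D_\Sigma$ (known structure) and the bounding quantities $D_\Sigma^e,D_\Sigma^f$, which are of the same form with $P_i$ replaced by $\emptyset$ or $\{X_1,\dots,X_{i-1}\}$. So fix $i$ and write $A_B=\tau I+\boldsymbol{x}_{P_i}^TH_V\boldsymbol{x}_{P_i}$, $A_R=\tau I+\boldsymbol{x}_{P_i}^TPP^T\boldsymbol{x}_{P_i}$. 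Both $f_B$ and $f_R$ are normal--inverse-gamma densities of exactly the same type, namely $\boldsymbol{\gamma}_i\mid\psi_i\sim N(\boldsymbol{\mu},\psi_i A^{-1})$ and $\psi_i\sim\mathrm{IG}(a,\beta)$, with parameter triples $(A_B,\boldsymbol{\mu}_B,a_B,\beta_B)$ and $(A_R,\boldsymbol{\mu}_R,a_R,\beta_R)$, where $a_B=(\delta+n+|P_i|)/2$ and $a_R=a_B-m/2$, the shapes differing by the \emph{fixed} amount $m/2$.

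First I would split the divergence by the chain rule, $D(f_B,f_R)=D\{f_B(\psi_i)\,\|\,f_R(\psi_i)\}+\mathbb{E}_{\psi_i\sim f_B}\,D\{f_B(\boldsymbol{\gamma}_i\mid\psi_i)\,\|\,f_R(\boldsymbol{\gamma}_i\mid\psi_i)\}$. The conditional term is the divergence between two Gaussians with proportional covariances, so it equals $\tfrac12\bigl(\operatorname{tr}(A_RA_B^{-1})-|P_i|+\log(\det A_B/\det A_R)\bigr)+\tfrac12\,(a_B/\beta_B)\,(\boldsymbol{\mu}_B-\boldsymbol{\mu}_R)^TA_R(\boldsymbol{\mu}_B-\boldsymbol{\mu}_R)$, using $\mathbb{E}_{\psi_i\sim f_B}[\psi_i^{-1}]=a_B/\beta_B$. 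The marginal term, since $1/\psi_i$ is Gamma-distributed and Kullback--Leibler divergence is invariant under reparametrisation, is the explicit Gamma--Gamma divergence, which collapses to $\tfrac m2\,\psi_{\mathrm{dg}}(a_B)-\log\Gamma(a_B)+\log\Gamma(a_R)+a_R\log(\beta_B/\beta_R)+a_B(\beta_R-\beta_B)/\beta_B$, where $\psi_{\mathrm{dg}}$ denotes the digamma function.

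The bulk of the work is to show the two parametrisations converge at rate $n^{-1}$. The key identity is that integrating $\boldsymbol{b}_i$ out of the Bayesian likelihood gives $H_V=(I+QVQ^T)^{-1}$, so by the Sherman--Morrison--Woodbury formula $PP^T-H_V=-Q(V^{-1}+Q^TQ)^{-1}V^{-1}(Q^TQ)^{-1}Q^T$. Since $m$ is fixed while the eigenvalues of $Q^TQ$ grow linearly in $n$ (the natural regularity assumption, together with $\tfrac1n\boldsymbol{x}_{P_i}^T\boldsymbol{x}_{P_i}$ and $\tfrac1n\|\boldsymbol{x}_i\|^2$ converging to finite nonsingular limits), this yields $\|PP^T-H_V\|_{\mathrm{op}}=O(n^{-1})$. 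Propagating this bound gives $A_B-A_R=O(1)$ while $A_B,A_R$ are themselves of order $n$, hence $A_RA_B^{-1}=I+O(n^{-1})$, so $\operatorname{tr}(A_RA_B^{-1})-|P_i|$ and $\log(\det A_B/\det A_R)$ are $O(n^{-1})$; likewise $\boldsymbol{\mu}_B-\boldsymbol{\mu}_R=O(n^{-1})$ and $\beta_B-\beta_R=O(1)$, with $a_B/\beta_B=O(1)$ and $(\boldsymbol{\mu}_B-\boldsymbol{\mu}_R)^TA_R(\boldsymbol{\mu}_B-\boldsymbol{\mu}_R)=O(n^{-1})$. Thus the conditional part of $D(f_B,f_R)$ is $O(n^{-1})$.

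The step I expect to be the real obstacle is the inverse-gamma marginal term: there $a_B,a_R,\beta_B,\beta_R$ all grow linearly in $n$, so a crude bound on the Gamma--Gamma divergence gives only $O(\log n)$ (from $\tfrac m2\psi_{\mathrm{dg}}(a_B)$ and the $\log\Gamma$ terms) plus $O(1)$ (from the two $\beta$-terms, each being of order $n\cdot n^{-1}$), and one must show these apparently divergent pieces cancel. For the $\Gamma$-function part I would invoke $\psi_{\mathrm{dg}}(x)=\log x-\tfrac1{2x}+O(x^{-2})$ and $\log\Gamma(x+c)-\log\Gamma(x)=c\log x+O(x^{-1})$ to get $\tfrac m2\psi_{\mathrm{dg}}(a_B)-\log\Gamma(a_B)+\log\Gamma(a_R)=O(a_B^{-1})=O(n^{-1})$. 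For the $\beta$-part, writing $r=(\beta_B-\beta_R)/\beta_R=O(n^{-1})$ and using $a_B=a_R+m/2$, the two terms combine to $a_R\{\log(1+r)-r/(1+r)\}-\tfrac m2\,r/(1+r)=a_R\,O(r^2)+O(r)=O(n^{-1})$, the crucial point being that $a_R=O(n)$ but the bracket is $O(r^2)=O(n^{-2})$. Hence $D(f_B,f_R)=O(n^{-1})\to0$; summing over the $p$ variables (and repeating for $D_\Sigma^e,D_\Sigma^f$) gives the theorem. A secondary technical point worth isolating is making precise, and verifying, the regularity conditions ensuring $Q^TQ$ and the $\boldsymbol{x}$-Gram matrices are well conditioned, since without control of their smallest eigenvalues the rate $\|PP^T-H_V\|_{\mathrm{op}}=O(n^{-1})$, and with it the whole argument, can fail.
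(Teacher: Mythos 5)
Your proposal is correct and reaches the same closed-form expression as the paper (the chain-rule decomposition you describe reproduces exactly the formula in Appendix B), but the asymptotic analysis proceeds by a genuinely different route. The paper's Appendix C works directly on that formula: it Taylor-expands the log-determinant term and the $\log(\beta_B/\beta_R)$ term, discards second- and higher-order terms, and observes that the retained first-order terms cancel against the trace, $-|P_i|/2$ and $(\beta_R/\beta_B-1)$ terms; it then disposes of the gamma-ratio and digamma terms via the Tricomi--Erd\'elyi expansion and the standard digamma asymptotics (as you do), and handles the remaining quadratic term by replacing $(\tau I+\boldsymbol{x}_{P_i}^TH_V\boldsymbol{x}_{P_i})^{-1}$ and $(\tau I+\boldsymbol{x}_{P_i}^TPP^T\boldsymbol{x}_{P_i})^{-1}$ with diagonal approximations. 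You instead establish a quantitative rate: the Woodbury-type identity $PP^T-H_V=-Q(V^{-1}+Q^TQ)^{-1}V^{-1}(Q^TQ)^{-1}Q^T$ gives $\|PP^T-H_V\|_{\mathrm{op}}=O(n^{-1})$ under your stated eigenvalue growth condition, and you propagate this to show every piece of the divergence --- including the near-cancelling $\beta$-terms, via $a_R\{\log(1+r)-r/(1+r)\}=a_R\,O(r^2)$ --- is individually $O(n^{-1})$. This buys two things the paper's argument lacks: a justification for discarding the higher-order Taylor terms (the paper simply ignores them, and its diagonal approximations are not controlled), and an explicit convergence rate. The price is that you must state regularity conditions (eigenvalues of $Q^TQ$ of order $n$, well-conditioned Gram matrices) that the paper leaves implicit behind its centring-and-scaling assumption $\boldsymbol{x}_i^T\boldsymbol{x}_i=n-1$; you correctly flag that without them the $O(n^{-1})$ bound, and hence the whole argument, can fail. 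On balance your version is the more rigorous of the two.
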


\begin{proof}
See Appendix C.
\end{proof}

This Theorem tells us that as sample size increases, the posterior densitites obtained when using the residual metric more closely approximate those obtained using the fully Bayesian approach. Hence, provided the sample size is large enough, the residual approach offers a useful alternative to the fully Bayesian approach.

\section{Examples} \label{Examples}

In this section, the residual and Bayesian approaches are compared using the Kullback-Leibler divergence for some specific data sets. We first consider simulated data sets and then consider a data set consisting of expression levels of grape heat-shock genes.


\subsection{Example 1}

In this example, multiple data sets were simulated from the following system of linear recursive equations:
\begin{eqnarray}
X_{ijk} &=& \sum_{l={1}}^{i-1}\gamma_{i, l}X_{ljk} + b_{ij} + \epsilon_{ijk}, \quad \epsilon_{ijk} \sim N(0, \psi_i), \notag \\
 \boldsymbol{b}_{i}  = ( b_{i1}, b_{i2})^T&\sim& N(0, \psi_i V), \notag \\
\boldsymbol{\gamma}_i = (\gamma_{1,l}, \ldots, \gamma_{i-1,l})^T  &\sim& N(0, \psi_iI),\quad
 \psi_i \sim \text{ Inv. Gamma}(1,2)\notag \\
  i &=& 1,\ldots, 20, \quad j= 1, 2, \quad k = 1, \ldots, n, \notag 
\end{eqnarray}
where the only non-zero $\gamma_{i, l}$s were those corresponding to the edges in the graph of Figure \ref{Eg1graph}, and $V ={\upsilon^{-1}} I$. One hundred data sets were simulated according to this model for each pair $\left( n, \upsilon \right)$, where $n=5,10,20,50,100$ and $\upsilon= 0.001, 0.01, 0.1, 1, 10, 100$. For each of the simulated data sets, $D_{\Sigma}^f$, $D_{\Sigma}^e$, and the divergence corresponding to the true structure were calculated. The key results are summarised in Figure \ref{Eg1Results}. 

 \begin{figure}[t]
\begin{center}
\includegraphics[width=0.5\textwidth]{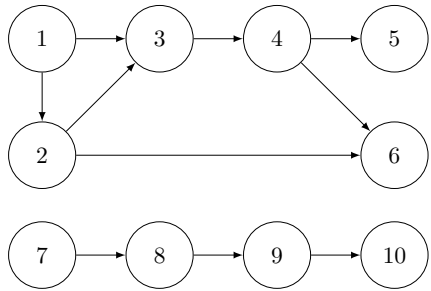}
\caption[Connected components of Example 1.]{Connected components of the underlying graph of Example 1.\label{Eg1graph}}
\end{center}
\end{figure}

As the true graph is quite sparse, the true divergence is closer to that of the empty graph than that of the full graph. For all values of $\upsilon$, as the sample size increases, the divergence decreases, and for all sample sizes, as $\upsilon$ increases, the divergence increases. When $\upsilon$ is large, the samples are \lq\lq similar\rq\rq\: to independent and identically distributed samples, and the fully Bayesian approach allows for this, whilst the residual approach cannot. In these situations, the exogenous variables are over-corrected for when the residual metric is applied.  For larger sample sizes, Figure \ref{Eg1Results}  shows that the divergences obtained for the empty and full graphs provide reasonable approximations to the divergence associated with the true structure.

\begin{figure}[t]
\begin{center}
	\includegraphics[width=1\textwidth]{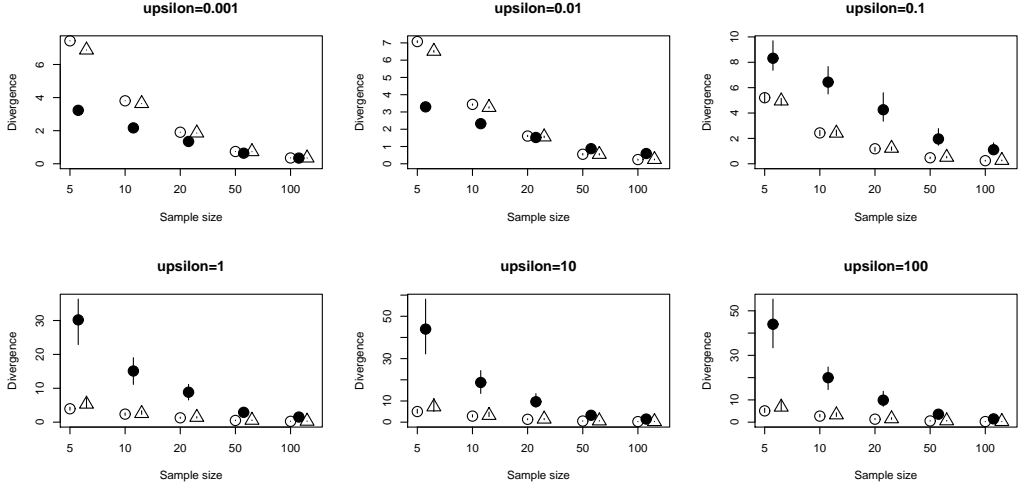}
		\caption[Example 1 results summary.]{The results of Example 1. The open circles represent the median value of $D_{\Sigma}^e$, the filled circles the median  value of $D_{\Sigma}^f$, and the triangles the median loss associated with the true structure, of the 100 simulated data sets for each $(n, \upsilon)$ pair. The vertical bars represent interquartile ranges. Note that the vertical scales of the plots differ.}
\label{Eg1Results}
\end{center}
\end{figure}

These observations are useful in providing guidelines for the use of the residual approach for a given data set. If $\upsilon$ is small, no matter what size the ratio $n/p$ is, the posterior distributions obtained under the Bayesian and residual approaches will be close to each other. In the case where $n/p$ is small, provided $\upsilon$ is large, a similar conclusion is reached. However, for data sets with small values of $n/p$, if the effect of exogenous variables are \emph{a priori} thought to have small variances, the residual approach should be used with caution.

 
\subsection{Example 2}
When the Bayesian approach is used, not much information is  available to guide prior specification of the covariance matrix of the effects, so iid random effects are usually assumed.  In this example, we show that there exist situations where the residual posterior density is closer to the posterior obtained using the data-generating prior, than the posterior density obtained by assuming iid random effects.

Data sets are simulated from the following system of linear recursive equations:
\begin{eqnarray}
X_{ij} &=& \sum_{k={1}}^{i-1}\gamma_{ik}X_{kj} + \sum_{r=1}^3 q_{rj} b_{ir} + \epsilon_{ij}, \quad \epsilon_{ij} \sim N(0, \psi_i), \notag \\
\boldsymbol{b}_{i} &\sim& N_3(0, \psi_i V), \quad \boldsymbol{\gamma}_i = (\gamma_{1,k}, \ldots, \gamma_{i-1,k})^T \sim N(0, \psi_iI),\quad\psi_i \sim \text{ Inv. Gamma}(1,2)\notag \\
  i &=& 1,\ldots, 10, \: \: j= 1, \ldots, 100, \notag 
\end{eqnarray}
where the only non-zero $\gamma_{ik}$s were those corresponding to the edges in the graph of Figure \ref{Eg1graph}, and the $q_{rj}$ are constant across data sets, having been simulated from a standard normal distribution. One hundred data sets were simulated according to this model for each of the following selections for $V$:
$$V_0 = I_3, V_1 =  \begin{pmatrix}10&0&0\\ 0&1&0\\0&0&\frac{1}{10}\end{pmatrix}, V_2 = \begin{pmatrix}  1 & \frac{7}{10} & \frac{6}{10} \\ \frac{7}{10}& 1 & \frac{1}{2} \\ \frac{6}{10} &\frac{1}{2} & 1 \end{pmatrix}, V_3 = \begin{pmatrix}  10 & \frac{7}{10} & \frac{1}{10} \\  \frac{7}{10}& 1 &  \frac{1}{5} \\ \frac{1}{10} &\frac{1}{5} & \frac{1}{10} \end{pmatrix}.$$

Let $f_{\upsilon}$ denote  the posterior distribution obtained under the Bayesian approach when the prior covariance matrix of the effects of exogenous variables is $\upsilon^{-1}I$. For each data set, $D_{\Sigma}(f_B, f_{\upsilon})-D_{\Sigma}(f_B, f_R)$ is calculated for values of $\upsilon$ between 0.0001 and 10, given the true covariance matrix of $\boldsymbol{b}_i$. Figure \ref{DivDiffFig} summarises the median value, and the upper and lower quartiles of $D_{\Sigma}(f_B, f_{\upsilon})-D_{\Sigma}(f_B, f_R)$ for the 100 data sets simulated under each of the four scenarios. The solid lines in Figure \ref{DivDiffFig} correspond to the scenarios where the effects of exogenous variables are heteroscedastic, and the black lines correspond to the scenarios with independent effects.
When $D_{\Sigma}(f_B, f_{\upsilon})-D_{\Sigma}(f_B, f_R)$ is positive, insufficient variation in the data is accounted for by assuming iid effects of exogenous variables with variance $\upsilon^{-1}\psi_i$. As can be seen in  Figure \ref{DivDiffFig}, this happens for all scenarios with increasing probability  as $\upsilon$ increases. Similarly, when $D_{\Sigma}(f_B, f_{\upsilon})-D_{\Sigma}(f_B, f_R)$ is negative, the residual approach removes too much of the variation in the data. 
Given the amount of prior information typically available about the covariance structure of the effects of exogenous variables, this example shows that use of the residual approach will often be preferable to assuming independent and identically distributed effects.

\begin{figure}[t]
\begin{center}
	\includegraphics[width=0.5\textwidth]{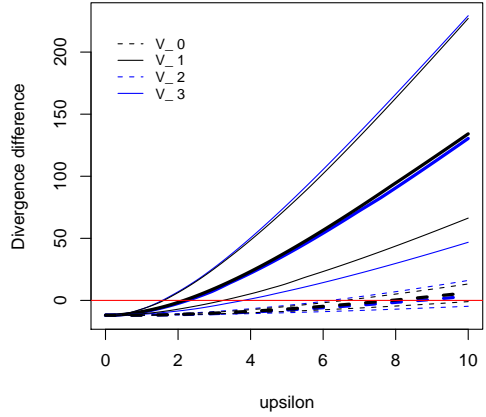}
	\caption[Example 2 results summary.]{Medians (thick lines) and upper and lower quartiles (thin lines) of $D_{\Sigma}(f_B, f_{\upsilon}) - D_{\Sigma}(f_B, f_R)$ for a range of values of $\upsilon$, for the 100 data sets generated for each of the scenarios described in Example 2. Blue lines correspond to dependent effects of exogenous variables, black lines to independent effects. Solid lines correspond to heteroscedastic effects of exogenous variables, dashed lines to homoscedastic effects.}
\label{DivDiffFig}
\end{center}
\end{figure}

\subsection{Grape-berry heat-shock gene example}

We now consider a data set consisting of samples of the expression levels of grape genes, previously discussed in \cite{KaszaArticle}. This data set consists of $n=50$ expression levels of each of $p=26$ grape genes, where the grapes themselves were sampled from three different vineyards located in different wine growing regions of South Australia, Australia.  These $26$ genes are heat-shock genes, see \cite{HSPs}, the expression levels of which are known to be associated with changes in ambient temperature. Accordingly, air temperature at each vineyard was recorded every hour from $5.5$ hours to $0.5$ hours before the grapes were sampled.

The data set considered here is a subset of a larger data set obtained from an Affymetrix chip microarray experiment conducted over the course of three years. Gene expression values were obtained from $174$ grape berry tissue samples: $68$ of these tissue samples were taken from one vineyard, $68$ from the second vineyard, and $38$ from the third. At the first two vineyards, four grape-berry tissue samples were selected each week for $17$ weeks, while at the third, $2$ grape-berry tissue samples were selected each week for $19$ weeks. At each of the vineyards, the first samples were taken at fruit set, when the fertilised grape flowers began to form berries. Samples were then taken each week for a pre-specified number of weeks. In this way, gene expression levels were measured over the course of the development of the grape berries. Of the $174$ samples taken, $162$ had complete temperature records. The data analysed  consist of the samples from each vineyard taken in the third to seventh weeks of sampling, inclusive. The samples from these weeks correspond to a period after fruit set, but before veraison, and it is thought that the relationships between expression levels of genes are relatively stable during this period of berry development, \cite{CoombeGrape, DaviesGrape}.

Let $X_{ij}$ be sample $j$ of gene $i$, $i=1, \ldots, 26$, $j = 1, \ldots, 50$, and let $q_{rj}$ be the data associated with sample $j$ of exogenous variable $r$, where $m$ exogenous variables are included in the model. Then the following model is assumed for each sample of each gene:
\begin{align}
X_{ij} &= \sum_{l\in P_i} \gamma_{il} X_{lj} + \sum_{r=1}^m q_{rj}b_{ir} + \epsilon_{ij}, \: \: \epsilon_{ij} \sim N(0, \psi_i),  \notag \\
 b_{ir} &\sim N\left(0,\frac{\psi_i}{ \upsilon}\right), \quad \gamma_{il} \sim N\left(0, \frac{\psi_i}{\tau}\right),\quad \psi_i \sim \text{Inv. Gamma}\left(\frac{\delta + |P_i|}{2},\frac{\tau}{2}\right).  \label{grapebir}
\end{align}

 For the grape-berry genes  under study here, temperature, which has been  observed directly at the different vineyards, is a known driver of biological activity. Moreover, when two or more genes respond similarly to the same driver of biological activity, the effect is to produce a component of correlation between the corresponding gene expression levels. Thus we should study the effects of temperature as an exogenous variable.  There are also likely to be additional variables which do not correspond directly to a single biological factor such as temperature. For example, the three vineyards are likely to differ in a number of features such as soil type and fertility, moisture and other micro-climate conditions, each of which could potentially influence the expression levels of certain sets of genes. Here the three vineyards are separated by large regional distances, but share the same macro-climate in southern Australia. Thus, vineyards  should be modelled as an additional exogenous variable with potentially considerable heterogeneity.

It is unlikely that the effects of temperature and vineyard are independent and identically distributed. While such a claim may be valid for either the temperature effects or the vineyard effects alone, it is highly unlikely that the effects of temperature and vineyard are identically distributed. There may also be some dependence between the temperature and vineyard effects. Given the difficulty in specifying a joint prior variance matrix of the temperature and vineyard effects, consideration of models including both temperature and vineyard effects simultaneously are unlikely  shed light on the performance of the residual approach to the estimation of Bayesian networks.  We therefore proceed by considering simple models, fitting temperature and vineyards as separate exogenous variables.

\begin{figure}[t]
\begin{center}
	\includegraphics[width=0.8\textwidth]{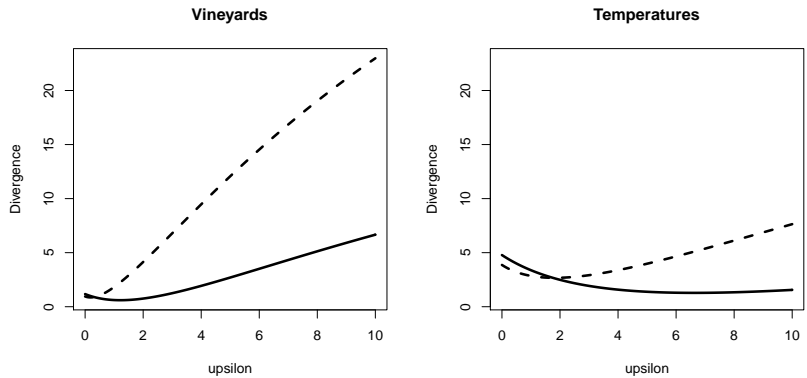}
	\caption[Grape gene example summary.]{Upper and lower bounds of the divergence for the marginal covariance matrix of the 26 grape genes, when vineyards and then temperatures are included as exogenous variables in the analysis. In both graphs, the solid line is the divergence corresponding to the empty graph, and the dashed line is the divergence corresponding to the full graph.}
\label{GrapeDivs}
\end{center}
\end{figure}

Firstly, we consider the vineyards only model, where $m=3$ and in which we are interested in the temperature-induced correlations between genes, and secondly,  the temperature only model, where $m=6$, where we do not remove the components of correlations induced by the vineyard micro-climates. Note that we are ignoring any temperature trend-component in all our models. Although models containing both temperature and vineyard effects may potentially be of interest, the effects may be confounded as explained above, and there is a risk of over-fitting the data. In fact, for the full interaction model fitted to the grape-berry gene data, the effective sample size, $n-m$, would be  zero, and the Kullback Leibler divergence could then be substantially artificially inflated.

Since neither the true network nor the true value of $\upsilon$ is known for this data set, the bounds $D_{\Sigma}^e$ and $D_{\Sigma}^f$ are calculated for a range of values of $\upsilon$. The results are shown in Figure \ref{GrapeDivs}. The left-hand graph in the figure displays the loss of information when the three vineyard effects only are included in the analysis  and the right-hand graph displays the divergence when only the six main temperature effects are included.  
 Figure \ref{GrapeDivs} indicates that for either model and all considered values of $\upsilon$, if the true underlying graph is thought to be sparse, as many biological networks are thought to be, the loss of information about the marginal covariance matrix when the residual approach is used will be minimal. If the true graph of the expression levels of the  genes is thought to be dense, for larger values of $\upsilon$, the figure shows that the divergence for the temperature model will be less than that associated with the vineyard model. The temperature model is naturally likely to be more explanatory, with the higher number of exogenous variables fitted. For either model, the Kullback-Leibler divergence is small and the residual approach metric is of demonstrable practical utility.


\section{Conclusion}

Using the Kullback-Leibler divergence, we have compared two methods for estimating Bayesian networks for data containing exogenous variables and random effects. Provided that the sample size is not too small in a statistical sense, we can conclude that the residual score metric offers a useful alternative to a fully Bayesian approach, with the posterior density functions of key parameters obtained under the two approaches being generally close. Many contemporary bioinformatics  studies are  conducted using substantial sample sizes, often based on many hundreds of samples or patients. Even with smaller studies however,  the results of our simulations and data analysis provide confidence that the residual estimation approach will perform well with small samples in the presence of exogenous variables.


\noindent
{\bf{{Appendix A}}}

$f_O(\boldsymbol{x_i}|\boldsymbol{x}_{P_i})$ is given by the pdf of \\
$$ t_{\delta + |P_i|} \left(\boldsymbol{0}, \frac{\tau}{\delta + |P_i|} \left\{ I - \boldsymbol{x}_{P_i}\left( \tau I + \boldsymbol{x}_{P_i}^T\boldsymbol{x}_{P_i}\right)^{-1}\boldsymbol{x}_{P_i}^T\right\}^{-1} \right). $$

$ f_V(\boldsymbol{x_i}|\boldsymbol{x}_{P_i})$ is given by the pdf of  $$t_{\delta + |P_i|} \left(\boldsymbol{0}, \frac{\tau}{\delta + |P_i|} \left\{ H_{V}  - H_{V} \boldsymbol{x}_{P_i}\left( \tau I + \boldsymbol{x}_{P_i}^TH_{V} \boldsymbol{x}_{P_i}\right)^{-1}\boldsymbol{x}_{P_i}^TH_{V} \right\}^{-1} \right),$$ $$H_{V} = I - Q \left(V^{-1} + Q^TQ \right)^{-1}Q^T.$$

$f_R(P^T\boldsymbol{x_i}|P^T\boldsymbol{x}_{P_i})$ is given by the pdf of 
$$t_{\delta + |P_i|} \left(\boldsymbol{0}, \frac{\tau}{\delta + |P_i|} \left\{ I - P^T\boldsymbol{x}_{P_i}\left( \tau I + \boldsymbol{x}_{P_i}^TPP^T\boldsymbol{x}_{P_i}\right)^{-1}\boldsymbol{x}_{P_i}^TP\right\}^{-1} \right). $$

\noindent
{\bf{{Appendix B}}}

\noindent

The Kullback Leibler divergence between $f_B(\boldsymbol{\gamma}_i, \psi_i | \boldsymbol{x}_i, \boldsymbol{x}_{P_i}) $ and $f_R(\boldsymbol{\gamma}_i, \psi_i | \boldsymbol{x}_i, \boldsymbol{x}_{P_i}) $ is given by

\begin{eqnarray}
D(f_B, f_R) &=& \frac{1}{2} \log\left( \frac{\left| \tau I +  \boldsymbol{x}_{P_i} ^T H_{V}   \boldsymbol{x}_{P_i}  \right|}{\left| \tau I +  \boldsymbol{x}_{P_i} ^T PP^T  \boldsymbol{x}_{P_i}  \right|}\right) + \frac{1}{2}tr\left\{\left( \tau I +  \boldsymbol{x}_{P_i} ^T PP^T  \boldsymbol{x}_{P_i}  \right) \left( \tau I +  \boldsymbol{x}_{P_i} ^T H_{V}   \boldsymbol{x}_{P_i}  \right) ^{-1}\right\} \notag \\
&-&\frac{|P_i|}{2} + \frac{1}{\beta_B} \frac{\delta + n + |P_i|}{4} \left(  \boldsymbol{\mu}_R - \boldsymbol{\mu}_B\right)^T\left( \tau I +  \boldsymbol{x}_{P_i}^T PP^T  \boldsymbol{x}_{P_i} \right)\left(  \boldsymbol{\mu}_R - \boldsymbol{\mu}_B\right) \notag \\
&+& \frac{\delta + n -m+ |P_i|}{2} \log\left(\frac{\beta_B}{\beta_R} \right) + \log\left\{ \frac{\Gamma \left( \frac{\delta + n -m+ |P_i|}{2} \right)}{\Gamma \left( \frac{\delta + n+ |P_i|}{2} \right)} \right\}  \notag \\
&+& \frac{\delta + n + |P_i|}{2}\left( \frac{\beta_R}{\beta_B}-1\right)+ \frac{m}{2} \text{Digamma}\left(\frac{\delta + n + |P_i|}{2} \right). \notag 
\end{eqnarray}

\noindent
{\bf{{Appendix C}}} 

Here we prove Theorem \ref{Dinfinity}. By Lemma \ref{Lemma}, we need only consider the divergence for the parameters for one regression: $D\left\{f_B(\boldsymbol{\gamma}_i, \psi_i | \boldsymbol{x}_i, \boldsymbol{x}_{P_i}),f_R(\boldsymbol{\gamma}_i, \psi_i | \boldsymbol{x}_i, \boldsymbol{x}_{P_i}) \right\}$.

Assume that each $ \boldsymbol{x}_i$ is centred and scaled, so that $ \boldsymbol{x}_i^T \boldsymbol{x}_i = n-1$, and note that $\boldsymbol{x}_{i} ^T H_{V}   \boldsymbol{x}_i < n-1$ and $\boldsymbol{x}_{i} ^T PP^T   \boldsymbol{x}_i < n-1$.

First, consider the log determinant term in $D(f_B, f_R)$:
\begin{eqnarray}
 \frac{1}{2} \log\left( \frac{\left| \tau I +  \boldsymbol{x}_{P_i} ^T H_{V}   \boldsymbol{x}_{P_i}  \right|}{\left| \tau I +  \boldsymbol{x}_{P_i} ^T PP^T  \boldsymbol{x}_{P_i}  \right|}\right)  &=& -\frac{1}{2} \log\left\{\left| \left(\tau I +  \boldsymbol{x}_{P_i} ^T H_{V}   \boldsymbol{x}_{P_i} \right)^{-1} \left( \tau I +  \boldsymbol{x}_{P_i} ^T PP^T  \boldsymbol{x}_{P_i} \right) \right|\right\}\notag \\
&=& -\frac{1}{2}  tr\left[ \log\left\{  \left(\tau I +  \boldsymbol{x}_{P_i} ^T H_{V}   \boldsymbol{x}_{P_i} \right)^{-1} \left( \tau I +  \boldsymbol{x}_{P_i} ^T PP^T  \boldsymbol{x}_{P_i} \right) \right\} \right] \notag \\
&=& -\frac{1}{2}  tr\left( \log\left[  I - \left\{ I- \left(\tau I +  \boldsymbol{x}_{P_i} ^T H_{V}   \boldsymbol{x}_{P_i} \right)^{-1} \left( \tau I +  \boldsymbol{x}_{P_i} ^T PP^T  \boldsymbol{x}_{P_i} \right) \right\}\right] \right),  \notag
\end{eqnarray}
using the Taylor series expansion this can be written as 
$$-\frac{1}{2}  tr\left[ - \sum_{k=1}^{\infty} \frac{1}{k}\left\{ I- \left(\tau I +  \boldsymbol{x}_{P_i} ^T H_{V}   \boldsymbol{x}_{P_i} \right)^{-1} \left( \tau I +  \boldsymbol{x}_{P_i} ^T PP^T  \boldsymbol{x}_{P_i} \right) \right\} ^k\right]. $$
If second- and higher-order terms are ignored, this becomes
\begin{eqnarray}
& &\frac{1}{2}  tr\left\{  I- \left(\tau I +  \boldsymbol{x}_{P_i} ^T H_{V}   \boldsymbol{x}_{P_i} \right)^{-1} \left( \tau I +  \boldsymbol{x}_{P_i} ^T PP^T  \boldsymbol{x}_{P_i} \right) \right\} \notag \\
 &=& \frac{|P_i|}{2} - \frac{1}{2}tr\left\{   \left(\tau I +  \boldsymbol{x}_{P_i} ^T H_{V}   \boldsymbol{x}_{P_i} \right)^{-1} \left( \tau I +  \boldsymbol{x}_{P_i} ^T PP^T  \boldsymbol{x}_{P_i} \right) \right\},  \notag
\end{eqnarray}
terms which cancel with other terms in $D$.

Note also that 
\begin{eqnarray}
\frac{\delta + n -m+ |P_i|}{2} \log\left(\frac{\beta_B}{\beta_R} \right) = \frac{\delta + n -m+ |P_i|}{2} \sum_{k=1}^{\infty} \frac{1}{k}\left( \frac{\beta_B -\beta_R}{\beta_B}\right)^k, \notag
\end{eqnarray}
and since $\frac{\beta_B -\beta_R}{\beta_B} < 1$, second- and higher-order terms in $\sum_{k=1}^{\infty} \frac{1}{k}\left( \frac{\beta_B -\beta_R}{\beta_B}\right)^k$ may be ignored, cancelling with other terms in $D(f_B, f_R)$ so that the divergence becomes
\begin{eqnarray}
D(f_B, f_R) &=& \frac{1}{\beta_B} \frac{\delta + n + |P_i|}{4} \left(  \boldsymbol{\mu}_R - \boldsymbol{\mu}_B\right)^T\left( \tau I +  \boldsymbol{x}_{P_i}^T PP^T  \boldsymbol{x}_{P_i} \right)\left(  \boldsymbol{\mu}_R - \boldsymbol{\mu}_B\right) \notag \\
&+& \log\left\{ \frac{\Gamma \left( \frac{\delta + n -m+ |P_i|}{2} \right)}{\Gamma \left( \frac{\delta + n+ |P_i|}{2} \right)} \right\} + \frac{m}{2} \text{Digamma}\left(\frac{\delta + n + |P_i|}{2} \right). \notag
\end{eqnarray}
Let $n_\ast = \frac{\delta + n + |P_i|}{2}$, and note that as $n$ approaches infinity, so too does $n_\ast$. From \cite{Tricomi}, as $n_\ast \rightarrow \infty$,
\begin{eqnarray}
\frac{\Gamma \left( n_\ast - \frac{ m}{2} \right)}{\Gamma \left( n_\ast \right)} =  \left( n_\ast \right)^{-\frac{m}{2}}\left\{ 1 + \frac{m(m+2)}{8n_\ast} + O\left( \frac{1}{n_\ast^2}\right)\right\}.   \notag
\end{eqnarray}
Hence, for large $n_\ast$,
\begin{eqnarray}
\log\left\{\frac{\Gamma \left( n_\ast - \frac{ m}{2} \right)}{\Gamma \left( n_\ast \right)}\right\} = -\frac{m}{2} \log( n_\ast )+ \log \left\{ 1 + \frac{m(m+2)}{8n_\ast} + O\left( \frac{1}{n_\ast^2}\right)\right\}. \label{lgammaratio}
\end{eqnarray}

From \cite{Abramowitz}, for large values of $n_\ast$
\begin{eqnarray}
 \text{Digamma}\left(n_\ast \right) = \log \left(n_\ast \right) - \frac{1}{2n_\ast} - \frac{1}{12n_\ast^2}  +\frac{1}{120n_\ast^4}  -\frac{1}{252n_\ast^6}  + O\left( \frac{1}{n_\ast^8}\right). \label{digamma}
\end{eqnarray}
Hence, as $n \rightarrow \infty$, $\log\left\{ \frac{\Gamma \left( \frac{\delta + n -m+ |P_i|}{2} \right)}{\Gamma \left( \frac{\delta + n+ |P_i|}{2} \right)} \right\} + \frac{m}{2} \text{Digamma}\left(\frac{\delta + n + |P_i|}{2} \right)$ approaches zero. All that remains to consider is the quadratic term:
\begin{eqnarray}
\frac{1}{\beta_B} \frac{\delta + n + |P_i|}{4} \left(  \boldsymbol{\mu}_R - \boldsymbol{\mu}_B\right)^T\left( \tau I +  \boldsymbol{x}_{P_i}^T PP^T  \boldsymbol{x}_{P_i} \right)\left(  \boldsymbol{\mu}_R - \boldsymbol{\mu}_B\right).  \notag
 \end{eqnarray}
 
Using the following approximations, 
\begin{eqnarray}
\left(\tau + \boldsymbol{x}_{P_i} ^TH_V \boldsymbol{x}_{P_i}\right)^{-1} &\approx&  \text{diag}\left(\frac{1}{\tau +  \boldsymbol{x}_{k} ^T \boldsymbol{x}_{k} - \boldsymbol{x}_{k} ^TQ\left(V^{-1} + Q^TQ \right)^{-1}Q^T\boldsymbol{x}_{k}}\right), \notag \\
\left( \tau + \boldsymbol{x}_{P_i} ^TPP^T \boldsymbol{x}_{P_i}\right)^{-1} &\approx&  \text{diag}\left(\frac{1}{\tau +  \boldsymbol{x}_{k} ^T \boldsymbol{x}_{k}-  \boldsymbol{x}_{k} ^T Q(Q^TQ)^{-1}Q^T\boldsymbol{x}_{k}}\right) \notag 
\end{eqnarray}
where $k \in P_i$, we may write
\begin{eqnarray}
\beta_B \approx\frac{1}{2}\left\{ {\tau} + {\boldsymbol{x}_{i} ^T \boldsymbol{x}_{i}}     -\boldsymbol{x}_{i} ^TQ\left(V^{-1} + Q^TQ \right)^{-1}Q^T \boldsymbol{x}_{i}- \sum_{k \in P_i} \frac{ \left[ {\boldsymbol{x}_{i} ^T \boldsymbol{x}_{k}}     -\boldsymbol{x}_{i} ^TQ\left(V^{-1} + Q^TQ \right)^{-1}Q^T \boldsymbol{x}_{k}\right]^2}{\tau +  \boldsymbol{x}_{k} ^T \boldsymbol{x}_{k} - \boldsymbol{x}_{k} ^TQ\left(V^{-1} + Q^TQ \right)^{-1}Q^T\boldsymbol{x}_{k}} \right\}  \notag
\end{eqnarray} 
and 
\begin{eqnarray}
& &\left(  \boldsymbol{\mu}_R - \boldsymbol{\mu}_B\right)^T\left( \tau I +  \boldsymbol{x}_{P_i}^T PP^T  \boldsymbol{x}_{P_i} \right)\left(  \boldsymbol{\mu}_R - \boldsymbol{\mu}_B\right) \notag \\
&=& \sum_{k \in P_i}\left\{\frac{\left( \boldsymbol{x}_{i}PP^T \boldsymbol{x}_{k}\right)^2}{\tau +\boldsymbol{x}_{k}^T\boldsymbol{x}_{k} - \boldsymbol{x}_{k}^T Q \left( Q^TQ\right)^{-1}Q^T \boldsymbol{x}_{k}} \right\} -2 \sum_{k \in P_i}\left(\frac{\boldsymbol{x}_i ^T H_V \boldsymbol{x}_k \boldsymbol{x}_{i}PP^T \boldsymbol{x}_{k}}{\tau +\boldsymbol{x}_{k}^T\boldsymbol{x}_{k}} \right)\notag \\
&+&\sum_{k \in P_i}\left\{ \frac{(\boldsymbol{x}_i ^TH_V \boldsymbol{x}_k)^2 \left( \tau +\boldsymbol{x}_{k}^T\boldsymbol{x}_{k} - \boldsymbol{x}_{k}^T Q \left( Q^TQ\right)^{-1}Q^T \boldsymbol{x}_{k}\right)} {(\tau +\boldsymbol{x}_{k}^T\boldsymbol{x}_{k} -   \boldsymbol{x}_{k} ^TQ\left(V^{-1} + Q^TQ \right)^{-1}Q^T\boldsymbol{x}_{k})^2}\right\}. \label{LargeUpsQuadTerm}
\end{eqnarray} 
As $n$ increases, $\frac{\delta + n + |P_i|}{\beta_B^{\ast}}$ approaches 1, each of the terms in Equation \eqref{LargeUpsQuadTerm} approaches zero, proving the result.


\begin{thebibliography}{}

\bibitem{Abramowitz}
	Abramowitz, M. and Stegun, I. A., editors. (1970)
	\newblock{\em Handbook of Mathematical Functions with Formulas, Graphs, and Mathematical Tables}.
	\newblock{Washington, D. C.: National Bureau of Standards}.
	
	
\bibitem{CooperHerskovits}
	Cooper, G. F. and Herskovits, E. (1992)
	\newblock{A {B}ayesian method for the induction of probabilistic networks from data}.
	\newblock{\em Machine Learning}, 
	\newblock{\bf 9},
	\newblock{309-347}.	

\bibitem{CoombeGrape}
	Coombe, B. G. (1973)
	\newblock{The regulation of set and development of the grape berry}.
	\newblock{\em Acta Horticulturae}, 
	\newblock{\bf 34},
	\newblock{261-271}.	
	
\bibitem{SAMSI}
	Dobra, A., Hans, C., Jones, B., Nevins, J.R. and West, M.(2004)
	\newblock{Sparse graphical models for exploring gene expression data}.
	\newblock{\em Journal of Multivariate Analysis}, 
	\newblock{\bf 90},
	\newblock{196-212}.	


	
\bibitem{LearningGaussianNetworks}
	Geiger, D. and Heckerman, D. (1994)
	\newblock{Learning Gaussian networks}.
	\newblock{In}
	\newblock{\em Proceedings of the Tenth Conference on Uncertainty in Artificial Intelligence}.


\bibitem{GeigerHeckerman}
	Geiger, D. and Heckerman, D. (2002)
	\newblock{Parameter priors for directed acyclic graphical models and the characterization of several probability distributions}.
	\newblock{\em The Annals of Statistics}, 
	\newblock{\bf 30},
	\newblock{1412-1440}.	
	
\bibitem{Kasza}
	Kasza, J. (2009)
	\newblock{\em {B}ayesian networks for high-dimensional data with complex mean structure}.
	\newblock{Ph. D. thesis, The University of Adelaide}.
	
\bibitem{KaszaArticle}
	Kasza, J. E., Glonek, G. and Solomon, P. (2011)
	\newblock{Estimating {B}ayesian networks for high-dimensional data with complex mean structure and random effects}.
	\newblock{arXiv:1002.2168}.


\bibitem{KollerFriedman}
	Koller, D. and Friedman, N. (2009)
	\newblock{\em Probabilistic Graphical Models: Principles and Techniques}.
	\newblock{The MIT Press}.
	
\bibitem{KLOrig}
	Kullback, S. and Leibler, R. A. (1951)
	\newblock{On information and sufficiency}.
	\newblock{\em The Annals of Mathematical Statistics}, 
	\newblock{\bf 22},
	\newblock{79-86}.	
	
\bibitem{Lauritzen}
	Lauritzen, S. L. (2004)
	\newblock{\em Graphical Models}.
	\newblock{Oxford: Clarendon Press}.

\bibitem{Pearl}
	Pearl, J. (1988)
	\newblock{\em Probabilistic reasoning in intelligent systems}.
	\newblock{Morgan Kaufmann}.	
	
\bibitem{DaviesGrape}
	Robinson, S. P. and Davies, C. (2000)
	\newblock{Molecular biology of grape berry ripening}.
	\newblock{\em Australian Journal of Grape and Wine Research}, 
	\newblock{\bf 6},
	\newblock{175-188}.
	
	
\bibitem{Sachs}
	Sachs, K., Perez, O., Pe'er, D., Lauffenburger, D. A. and Nolan, G. P. (2005)
	\newblock{Causal protein-signaling networks derived from multiparameter single-cell data}.
	\newblock{\em Science}, 
	\newblock{\bf 308},
	\newblock{523-529}.

\bibitem{SGS}
	Spirtes, P. and Glymour, C. and Scheines, R. (1993)
	\newblock{\em Causation, Prediction, and Search}.
	\newblock{Springer-Verlag}.

\bibitem{Tricomi}
	Tricomi, F. G. and Erd\'elyi, A. (1951)
	\newblock{The asymptotic expansion of a ratio of gamma functions}.
	\newblock{\em Pacific Journal of Mathematics}, 
	\newblock{\bf 1},
	\newblock{133-142}.
	
\bibitem{MMHC}
	Tsamardinos, I. and Brown, L. E. and Aliferis, C. F. (2006)
	\newblock{The max-min hill-climbing {B}ayesian network structure learning algorithm}.
	\newblock{\em Machine Learning},
	\newblock{\bf 65},
	\newblock{31-76}.	
	
\bibitem{HSPs}
	Wang, W., Vinocur, B., Shoseyov, O. and Altman, A. (2004)
	\newblock{Role of plant heat-shock proteins and molecular chaperones in the abiotic stress response}.
	\newblock{\em Trends in Plant Science}, 
	\newblock{\bf 9},
	\newblock{244-252}.

\end{thebibliography}
\end{document}